\footnotesize\linespread{0.8}\ttfamily,
\renewcommand{\subsubsection}[1]{\paragraph{#1}}
\renewcommand{\|}{\:|\:}
\newcommand{\N}{\mathbb{N}}
\newcommand{\B}{\mathbb{B}}
\newcommand{\false}{0}
\newcommand{\true}{1}
\newcommand{\pmap}{\rightharpoonup}
\newcommand\restr[2]{{
  \left.\kern-\nulldelimiterspace 
  #1 
  \vphantom{\big|} 
  \right|_{#2} 
  }}
\newcommand*{\defeq}{\stackrel{\Delta}{=}}
\newcommand{\xor}{\oplus}
\renewcommand{\implies}{\Rightarrow}
\newcommand{\revlang}{{\sc Revs}\xspace}
\newcommand{\revs}{{\sc Revs}\xspace}
\newcommand{\rever}{{\sc ReVerC}\xspace}
\newcommand{\fsharp}{F\#\xspace}
\newcommand{\fstar}{F$^\star$\xspace}
\newcommand{\rlet}[3]{\textsf{let } #1 = #2 \textsf{ in } #3}
\newcommand{\rfun}[2]{\lambda #1.#2}
\newcommand{\rapply}[2]{(#1\; #2)}
\newcommand{\rseq}[2]{#1; #2}
\newcommand{\rif}[3]{\textsf{if } #1 \textsf{ then } #2 \textsf{ else } #3}
\newcommand{\rfor}[4]{\textsf{for } #1 \textsf{ in } #2..#3 \textsf{ do } #4}
\newcommand{\rassign}[2]{#1\leftarrow#2}
\newcommand{\rtrue}{1}
\newcommand{\rfalse}{0}
\newcommand{\runit}{\textsf{unit}}
\newcommand{\rxor}[2]{#1\oplus#2}
\newcommand{\rand}[2]{#1\land#2}
\newcommand{\ror}[2]{#1\lor#2}
\newcommand{\rnot}[1]{\neg#1}
\newcommand{\rregister}[2]{\textsf{reg } #1 \dots #2}
\newcommand{\rindex}[2]{#1.[#2]}
\newcommand{\rslice}[3]{#1.[#2..#3]}
\newcommand{\rappend}[2]{\textsf{append } #1\; #2}
\newcommand{\rrotate}[2]{\textsf{rotate } #1\; #2}
\newcommand{\rclean}[1]{\textsf{clean } #1}
\newcommand{\rassert}[1]{\textsf{assert } #1}
\newcommand{\nil}{\textsf{nil}}
\newcommand{\Heap}{\textbf{Store}}
\newcommand{\heap}{\sigma}
\newcommand{\dom}{\textsf{dom}}
\newcommand{\cod}{\textsf{cod}}
\newcommand{\subst}[2]{[#1\mapsto#2]}
\newcommand{\config}[2]{\langle #1, #2\rangle}
\newcommand{\sem}[1]{\llbracket #1\rrbracket}
\newcommand{\Anc}{\mathbf{AncHeap}}
\newcommand{\Circ}{\mathbf{Circ}}
\newcommand{\Stat}{\textnormal{\textbf{State}}}
\newcommand{\notgate}[1]{\text{NOT } #1}
\newcommand{\cnot}[2]{\text{CNOT } #1\; #2}
\newcommand{\toffoli}[3]{\text{Toffoli } #1\; #2\; #3}
\newcommand{\bcomp}{\textsc{compile-BExp}}
\newcommand{\BExp}{\textnormal{\textbf{BExp}}}
\newcommand{\vars}[1]{\mathsf{vars}(#1)}
\newcommand{\ah}{\xi}
\renewcommand{\nil}{-}
\newcommand{\I}{\mathscr{I}}
\newcommand{\assign}{\textnormal{\textsf{assign}}}
\renewcommand{\eval}{\textnormal{\textsf{eval}}}
\newcommand{\cl}{\kappa}
\newcommand{\uses}[1]{\mathsf{use}(#1)}
\newcommand{\mods}[1]{\mathsf{mod}(#1)}
\newcommand{\controls}[1]{\mathsf{control}(#1)}
\newcommand{\st}{s}
\newcommand{\uncompute}[2]{\mathsf{uncompute}(#1, #2)}
\begin{document}

\title{Verified compilation of space-efficient reversible circuits}

\author{
  Matthew Amy\inst{1, 2} \and Martin Roetteler\inst{3} \and Krysta M.~Svore\inst{3}
}
\institute{
  Institute for Quantum Computing, Waterloo, Canada
  \and David R. Cheriton School of Computer Science, University of Waterloo, \\ Waterloo, Canada
  \and Microsoft Research, Redmond, USA
}
	
\maketitle

\begin{abstract}
The generation of reversible circuits from high-level code is an important problem in several application domains, including low-power electronics and quantum computing. Existing tools compile and optimize reversible circuits for various metrics, such as the overall circuit size or the total amount of space required to implement a given function reversibly. However, little effort has been spent on verifying the correctness of the results, an issue of particular importance in quantum computing. There, compilation allows not only mapping to hardware, but also the estimation of resources required to implement a given quantum algorithm, a process that is crucial for identifying which algorithms will outperform their classical counterparts. We present a reversible circuit compiler called \rever, which has been formally verified in F$^\star$ and compiles circuits that operate correctly with respect to the input program. Our compiler compiles the \revs language \cite{Parent:2015} to combinational reversible circuits with as few ancillary bits as possible, and provably cleans temporary values.
\end{abstract}

\section{Introduction}

The ability to evaluate classical functions coherently and in superposition as part of a larger quantum computation is essential for many quantum algorithms. For example, Shor's quantum algorithm \cite{Shor:97} uses classical modular arithmetic and Grover's quantum algorithm \cite{Grover:96} uses classical predicates to implicitly define the underlying search problem. There is a resulting need for tools to help a programmer translate classical, irreversible programs into a form which a quantum computer can understand and carry out, namely into reversible circuits, which are a special case of quantum transformations \cite{NC:2000}. Other applications of reversible computing include low-power design of classical circuits. See \cite{Markov:2014} for background and a critical discussion.

Several tools have been developed for synthesizing reversible circuits, ranging from low-level methods for small circuits such as \cite{MMD:2003,MMD:2007,WD:2010,SSP:2013,LJ:2014} (see also \cite{SM:2013} for a survey) to high-level programming languages and compilers \cite{yokoyama2007reversible,Wille:2010,Thomsen:2012,GLR+:2013b,Parent:2015}. In this paper we are interested in the latter class---i.e., methods for compiling high-level code to reversible circuits. Such compilers commonly perform optimization, as the number of bits quickly grows with the standard techniques for achieving reversibility (see, e.g., \cite{Scherer:2015}). The question, as with general purpose compilers, is whether or not we can trust these optimizations.

In most cases, extensive testing of compiled programs is sufficient to establish the correctness of both the source program and its translation to a target architecture by the compiler. Formal methods are typically reserved for safety- (or mission-) critical applications.  For instance, formal verification is an essential step in modern computer-aided circuit design due largely to the high cost of a recall. Reversible -- specifically, quantum -- circuits occupy a different design space in that 1) they are typically ``software circuits,'' i.e., they are not intended to be implemented directly in hardware, and 2) there exist few examples of hardware to actually run such circuits. Given that there are no large-scale universal quantum computers currently in existence, one of the goals of writing a quantum circuit compiler at all is to accurately gauge the amount of physical resources needed to perform a given algorithm, a process called \emph{resource estimation}. Such resource estimates can be used to identify the ``crossover point'' when a problem becomes more efficient to solve on a quantum computer, and are invaluable both in guiding the development of quantum computers and in assessing their potential impact. However, different compilers give wildly different resource estimates for the same algorithms, making it difficult to trust that the reported numbers are correct. For this reason compiled circuits need to have some level of formal guarantees as to their correctness for resource estimation to be effective. 

In this paper we present \rever, a lightly optimizing compiler for the \revs language \cite{Parent:2015} which has been written and proven correct in the dependently typed language \fstar \cite{Swamy:2015}. Circuits compiled with \rever are certified to preserve the semantics of the source \revs program, which we have for the first time formalized, and to reset or \emph{clean} all ancillary (temporary) bits used so that they may be used later in other computations. In addition to formal verification of the compiler, \rever provides an assertion checker which can be used to formally verify the source program itself, allowing effective end-to-end verification of reversible circuits. 

\subsubsection{Contributions} The following is a summary of the contributions of our paper:
\begin{itemize}
\item We give a formal semantics of \revlang.
\item We present a compiler for \revlang called \rever, written in \fstar. The compiler currently has three modes: direct to circuit, eager-cleaning, and Boolean expression compilation. 
\item We develop a new method of eagerly cleaning bits to be reused again later, based on \emph{cleanup expressions}.
\item Finally, we verify correctness of \rever with machine-checked proofs that the compiled reversible circuits faithfully implement the input program's semantics, and that all ancillas used are returned to their initial state.
\end{itemize}

\subsubsection{Related work}

Due to the reversibility requirement of quantum computing, quantum programming languages and compilers typically have methods for generating reversible circuits. Quantum programming languages typically allow compilation of classical, irreversible code in order to minimize the effort of porting existing code into the quantum domain. In QCL \cite{Omer:2000}, ``pseudo-classical'' operators -- classical functions meant to be run on a quantum computer -- are written in an imperative style and compiled with automatic ancilla management. As in \revs, such code manipulates registers of bits, splitting off sub-registers and concatenating them together. The more recent Quipper \cite{GLR+:2013b} automatically generates reversible circuits from classical code by a process called \emph{lifting}: using Haskell metaprogramming, Quipper lifts the classical code to the reversible domain with automated ancilla management. However, little space optimization is performed \cite{Scherer:2015}. 

Verification of reversible circuits has been previously considered from the viewpoint of checking equivalence against a benchmark circuit or specification \cite{Wille:2009,Yamashita:2010}. This can double as both \emph{program verification} and \emph{translation validation}, but every compiled circuit needs to be verified separately. Moreover, a program that is easy to formally verify may be translated into a circuit with hundreds of bits, and is thus very difficult to verify. Recent work has shown progress towards verification of more general properties of reversible and quantum circuits via model checking \cite{Zuliani16}, but to the authors' knowledge, no verification of a reversible circuit compiler has yet been carried out. By contrast, many compilers for general purpose programming languages have been formally verified in recent years -- most famously, the CompCert optimizing C compiler \cite{Leroy:2006}, written and verified in Coq. Since then, many other compilers have been developed and verified in a range of languages and logics including Coq, HOL, \fstar, etc., with features such as shared memory \cite{Beringer:2014}, functional programming \cite{Chlipala:2010,Fournet:2013} and modularity \cite{Perconti:2014,Neis:2015}.

\section{Reversible computing}\label{sec:overview}

Reversible functions are Boolean functions $f:\{0,1\}^n \rightarrow \{0,1\}^n$ which can be inverted on all outputs, i.e., precisely those functions which correspond to permutations of a set of cardinality $2^n$, for some $n \in \N$. As with classical circuits, reversible functions can be constructed from universal gate sets -- for instance, it is known that the Toffoli gate which maps $(x,y,z) \mapsto (x,y,z\oplus (x\land y))$, together with the controlled-NOT gate (CNOT) which maps $(x,y) \mapsto (x, x \oplus y)$ and the NOT gate which maps $x \mapsto x \oplus 1$, is universal for reversible computation \cite{NC:2000}.

An important metric that is associated with a reversible circuit is the amount of scratch space required to implement a given target function, i.e., temporary bits which store intermediate results of a computation. In quantum computing such bits are commonly denoted as {\em ancilla} bits. A very important difference to classical computing is that scratch bits cannot just be overwritten when they are no longer needed: any ancilla that is used as scratch space during a reversible computation must be returned to its initial value---commonly assumed to be $0$---computationally. Moreover, if an ancilla bit is not ``cleaned'' in this way, in a quantum computation it may remain entangled with the computational registers which in turn can destroy the desired interferences that are crucial for many quantum algorithms. 

Figure \ref{fig:exampleCircuit} shows a reversible circuit over NOT, CNOT, and Toffoli gates computing the NOR function. Time flows left to right, with input values listed on the left and outputs listed on the right. NOT gates are depicted by an $\oplus$, while CNOT and Toffoli gates are written with an $\oplus$ on the \emph{target} bit (the bit whose value changes) connected by a vertical line to, respectively, either one or two \emph{control} bits identified by solid dots. Two ancilla qubits are used which both initially are $0$; one of these ultimately holds the (otherwise irreversible) function value, the other is returned to zero. For larger circuits, it becomes a non-trivial problem to assert a) that indeed the correct target function $f$ is implemented and b) that indeed all ancillas that are not outputs are returned to $0$.

\subsubsection{\revlang}

\begin{wrapfigure}{r}{0.5\textwidth}
\centerline{\includegraphics[scale=0.3]{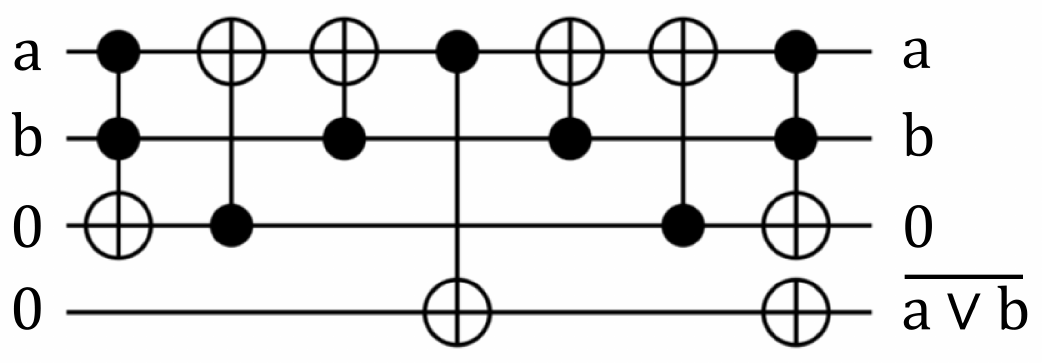}}\bigskip
\caption{A Toffoli network computing the NOR function $f(a,b) = \overline{a\vee b}$.}
\label{fig:exampleCircuit}
\end{wrapfigure}

From Bennett's work on reversible Turing machines it follows that any function can be implemented by a suitable reversible circuit \cite{Bennett:73}: if an $n$-bit function $x \mapsto f(x)$ can be implemented with $K$ gates over $\{{\rm NOT}, {\rm AND}\}$, then the reversible function $(x,y) \mapsto (x,y\oplus f(x))$ can be implemented with at most $2K+n$ gates over the Toffoli gate set. The basic idea behind Bennett's method is to replace all AND gates with Toffoli gates, then perform the computation, copy out the result, and undo the computation. This strategy is illustrated in Figure \ref{fig:bennett}, where the box labelled $U_f$ corresponds to $f$ with all AND gates substituted with Toffoli gates and the inverse box is simply obtained by reversing the order of all gates in $U_f$. Bennett's method has been used to perform classical-to-reversible circuit compilation in the quantum programming language Quipper \cite{GLR+:2013b}. One potential disadvantage of Bennett's method is the large number of ancillas it requires as the required memory scales proportional to the circuit {\em size} of the initial, irreversible function $f$. 

\begin{wrapfigure}[12]{l}{0.5\textwidth}
\centerline{\includegraphics[scale=0.3]{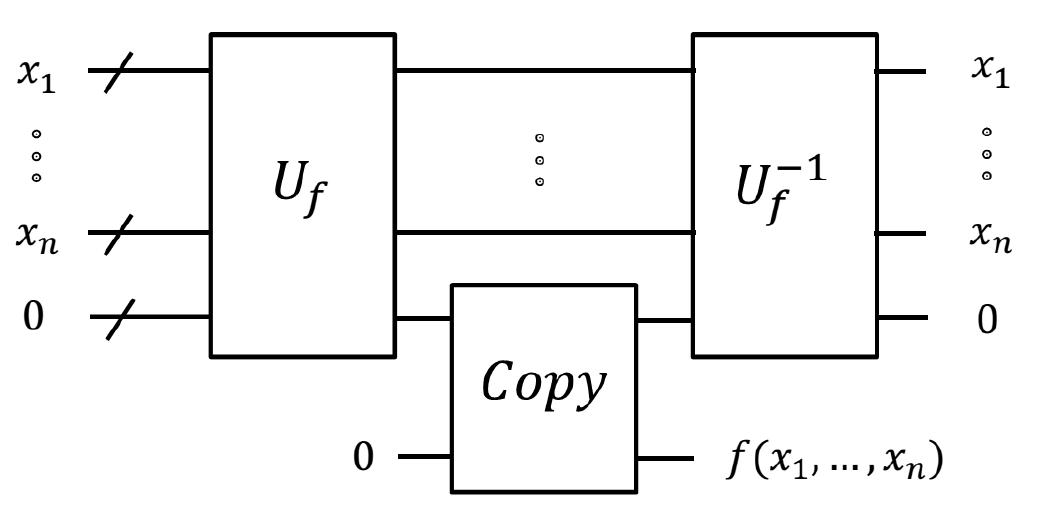}}
\caption{A reversible circuit computing $f(x_1,\dots, x_m)$ using the Bennett trick. Input lines with slashes denote an arbitrary number of bits.}
\label{fig:bennett}
\end{wrapfigure}

In recent work, an attempt was made with the \revs compiler (and programming language of the same name) \cite{Parent:2015} to improve on the space-complexity of Bennett's strategy by generating circuits that are \emph{space-efficient} -- that is, \revs is an optimizing compiler with respect to the number of bits used. Their method makes use of a dependency graph to determine which bits may be eligible to be cleaned \emph{eagerly}, before the end of the computation and hence be reused again. We build on their work in this paper, formalizing \revs and developing a \emph{verified} compiler without too much loss in efficiency. In particular, we take the idea of eager cleanup and develop a new method analogous to garbage collection.

\section{Languages}\label{sec:language}
In this section we give a formal definition of \revlang, as well as the intermediate and target languages of the compiler.

\subsection{The Source}
The abstract syntax of \revlang is presented in Figure~\ref{fig:syntax}. The core of the language is a simple imperative language over Boolean and array (register) types. The language is further extended with ML-style functional features, namely first-class functions and \emph{let} definitions, and a reversible domain-specific construct \emph{clean} which asserts that its argument evaluates to $0$ and frees a bit.

\begin{figure}[t]
	\footnotesize
	\begin{align*}
	{\bf Var} \;&x, \quad {\bf Bool} \;b\in\{0,1\}=\B, \quad{\bf Nat} \;i, j\in\N, \quad {\bf Loc} \;l\in\N \\
	{\bf Val} \;&v ::= \runit \| l \| \rregister{l_1}{l_n} \| \rfun{x}{t} &\\
		{\bf Term} \; &t ::= \rlet{x}{t_1}{t_2} \| \rfun{x}{t} \| \rapply{t_1}{t_2} \| \rseq{t_1}{t_2} \| x \| \rassign{t_1}{t_2} \| 
		         b \| \rxor{t_1}{t_2} \|  \rand{t_1}{t_2} & \\
			&\hspace*{1.7em}|\; \rclean{t} \| \rassert{t} \| \rregister{t_1}{t_n}\! \| \rindex{t}{i} \| \rslice{t}{i}{j} \| 
			\rappend{t_1\!}{t_2}\! \| \rrotate{i}{t} &
	\end{align*}
\caption{Syntax of \revlang.}
\label{fig:syntax}
\end{figure}

In addition to the basic syntax of Figure~\ref{fig:syntax} we add the following derived operations:
\vspace{-10pt}
\begin{gather*}
	\rnot{t} \defeq \rtrue\oplus t, \qquad \ror{t_1}{t_2} \defeq \rxor{(\rand{t_1}{t_2})}{(\rxor{t_1}{t_2})}, \\
	\rif{t_1}{t_2}{t_3}\defeq \rxor{(\rand{t_1}{t_2})}{(\rand{\rnot{t_1}}{t_3})}, \\
	\rfor{x}{i}{j}{t}\defeq t\subst{x}{i};\cdots;t\subst{x}{j}.
\end{gather*}
Note that \revlang has no \emph{dynamic} control -- i.e. control dependent on run-time values. In particular, every \revlang program can be transformed into a straight-line program. This is due to the restrictions of our target architecture (see below). 

The \rever compiler uses \fsharp as a meta-language to generate \revs code with particular register sizes and indices, possibly computed by some more complex program. Writing an \fsharp program that generates \revlang code is similar in effect to writing in a hardware description language \cite{Claessen:2001}. We use \fsharp's \emph{quotations} mechanism to achieve this by writing \revlang programs in quotations \texttt{<@\dots @>}. Note that unlike languages such as Quipper, our strictly combinational target architecture doesn't allow computations in the meta-language to depend on computations within \revs.

\begin{example}
\begin{figure}[t]
\begin{lstlisting}[mathescape=true]
fun a b ->
  let carry_ex a b c = (a ${\color{BurntOrange}\land}$ (b ${\color{BurntOrange}\oplus}$ c)) ${\color{BurntOrange}\oplus}$ (b ${\color{BurntOrange}\land}$ c)
  let result = Array.zeroCreate(n)
  let mutable carry = false
  
  result.[0] ${\color{BurntOrange}\leftarrow}$ a.[0] ${\color{BurntOrange}\oplus}$ b.[0]
  for i in 1 .. n-1 do
    carry ${\color{BurntOrange}\leftarrow}$ carry_ex a.[i-1] b.[i-1] carry
    result.[i] ${\color{BurntOrange}\leftarrow}$  a.[i] ${\color{BurntOrange}\oplus}$ b.[i] ${\color{BurntOrange}\oplus}$ carry
  result
\end{lstlisting}
\caption{Implementation of an $n$-bit adder.}
\label{fig:add}
\end{figure}

Figure~\ref{fig:add} gives an example of a carry-ripple adder written in \revlang. Na\"{\i}vely compiling this implementation would result in a new bit being allocated for every carry bit, as the assignment on line $8$ is irreversible (note that $\texttt{carry_ex 1 1 0} = \texttt{1} = \texttt{carry_ex 1 1 1}$, hence the value of \texttt{c} can not be uniquely computed given \texttt{a}, \texttt{b} and the output). \rever reduces this space usage by automatically cleaning the old carry bit, allowing it to be reused.

\end{example}

\subsubsection{Semantics}

We designed the semantics of \revlang with two goals in mind:
\begin{enumerate}
\item keep the semantics as close to the original implementation as possible, and
\item simplify the task of formal verification.
\end{enumerate} 
The result is a somewhat non-standard semantics that is nonetheless intuitive for the programmer. Moreover, the particular semantics naturally enforces a style of programming that results in efficient circuits and allows common design patterns to be optimized.

The big-step semantics of \revlang is presented in Figure~\ref{fig:semantics} as a relation ${\implies} \subseteq \text{Config}\times\text{Config}$ on configuration-pairs -- pairs of terms and Boolean-valued stores. A key feature of our semantics is that Boolean, or bit values, are always allocated on the store. Specifically, Boolean constants and expressions are modelled by allocating a new location on the store to hold its value -- as a result all Boolean values, including constants, are mutable.

The allocation of Boolean values on the store serves two main purposes: to give the programmer fine-grain control over how many bits are allocated, and to provide a simple and efficient model of \emph{registers} -- i.e. arrays of bits. Specifically, registers are modelled as static length lists of bits. This allows the programmer to perform array-like operations such as bit modifications ($\rassign{\rindex{t_1}{i}}{t_2}$) as well as list-like operations such as slicing ($\rslice{t}{i}{j}$) and concatenation ($\rappend{t_1\!}{t_2}$) without copying out entire registers. We found that these were the most common access patterns for arrays of bits in low-level bitwise code (e.g. arithmetic and cryptographic implementations).

The semantics of $\oplus$ (Boolean XOR) and $\land$ (Boolean AND) are also notable in that they first reduce both arguments to locations, \emph{then} retrieve their value. This results in statements whose value may not be immediately apparent -- e.g., $\rxor{x}{(\rseq{\rassign{x}{y}}{y})}$, which under these semantics will always evaluate to \false. The benefit of this definition is that it allows the compiler to perform important optimizations without a significant burden on the programmer.

\begin{figure}[t]
\begin{minipage}{0.20\textwidth}
\begin{flalign*}
	{\bf Store} \;\;&\heap :{\N}\pmap \B & \\
	{\bf Config} \;\;&c ::= \config{t}{\heap} & \\
\end{flalign*}
\end{minipage}
\begin{minipage}{0.75\textwidth}

\centerline{
\inference[\sc{[let]}]
{
	\config{t_1}{\heap} => \config{v_1}{\heap'} \qquad
	\config{t_2\subst{x}{v_1}}{\heap'} => \config{v_2}{\heap''}
}{
	\config{\rlet{x}{t_1}{t_2}}{\heap} => \config{v_2}{\heap''}
}}
\end{minipage}
\vspace{-10pt}

\begin{minipage}{0.25\textwidth}
\centerline{
\inference[\sc {[refl]}]{}
{
	\config{v}{\heap} => \config{v}{\heap}
}}
\end{minipage}
\begin{minipage}{0.75\textwidth}
\centerline{
\inference[\sc{[bexp]}]
{
	\config{t_1}{\heap} => \config{l_1}{\heap'} \quad
	\config{t_2}{\heap'} => \config{l_2}{\heap''} \quad
	l_3\notin\dom(\heap'')
}{
	\config{t_1\star t_2}{\heap} => \config{l_3}{\heap''\subst{l_3}{\heap''(l_1)\star\heap''(l_2)}}
}}
\end{minipage}
\vspace{-10pt}

\begin{minipage}{0.45\textwidth}
\centerline{
\inference[\sc{[bool]}]
{
	b\in\B \quad l\notin\dom(\heap)
}{
	\config{b}{\heap} => \config{l}{\heap''\subst{l}{b}}
}}
\end{minipage}
\begin{minipage}{0.49\textwidth}
\vspace{1.5em}
\centerline{
\inference[\sc{[app]}]
{
	\config{t_1}{\heap} => \config{\rfun{x}{t_1'}}{\heap'}
	\config{t_2}{\heap'} => \config{v_2}{\heap''} \\
	\config{t_1'\subst{x}{v_2}}{\heap''} => \config{v}{\heap'''}
}{
	\config{\rapply{t_1}{t_2}}{\heap} => \config{v}{\heap'''}
}}
\end{minipage}
\vspace{-10pt}

\begin{minipage}{0.49\textwidth}
\centerline{
\inference[\sc{[seq]}]
{
	\config{t_1}{\heap} => \config{\runit}{\heap'} \\
	\config{t_2}{\heap'} => \config{v}{\heap''}
}{
	\config{\rseq{t_1}{t_2}}{\heap} => \config{v}{\heap''}
}}
\end{minipage}
\begin{minipage}{0.42\textwidth}
\vspace{1.5em}
\centerline{
\inference[\sc{[assn]}]
{
	\config{t_1}{\heap} => \config{l_1}{\heap'} \\
	\config{t_2}{\heap'} => \config{l_2}{\heap''}
}{
	\config{\rassign{t_1}{t_2}}{\heap} => \config{\runit}{\heap''\subst{l_1}{\heap''(l_2)}}
}}
\end{minipage}

\vspace{5pt}
\centerline{
\inference[\sc{[append]}]
{
	\config{t_1}{\heap} => \config{\rregister{l_1}{l_m}}{\heap'} \quad
	\config{t_2}{\heap'} => \config{\rregister{l_{m+1}}{l_n}}{\heap''}
}{
	\config{\rappend{t_1}{t_2}}{\heap} => \config{\rregister{l_1}{l_n}}{\heap''}
}}
\vspace{0.8em}

\begin{minipage}{0.58\textwidth}
\centerline{
\inference[\sc{[index]}]
{
	\config{t}{\heap} => \config{\rregister{l_1}{l_n}}{\heap'} \qquad
	1\leq i \leq n \qquad
}{
	\config{\rindex{t}{i}}{\heap} => \config{l_{i}}{\heap'}
}}
\vspace{5pt}
\centerline{
\inference[\sc{[slice]}]
{
	\config{t}{\heap} => \config{\rregister{l_1}{l_n}}{\heap'} \quad
	1\leq i\leq j\leq n
}{
	\config{\rslice{t}{i}{j}}{\heap} => \config{\rregister{l_i}{l_j}}{\heap'}
}}
\vspace{25pt}
\centerline{
\inference[\sc{[rotate]}]
{
	\config{t}{\heap} => \config{\rregister{l_1}{l_n}}{\heap'} \quad
	1< i < n
}{
	\config{\rrotate{t}{i}}{\heap} => \config{\rregister{l_{i}}{l_{i-1}}}{\heap'}
}}
\end{minipage}
\begin{minipage}{0.30\textwidth}
\vspace{-15pt}
\centerline{
\inference[\sc{[reg]}]
{
	\config{t_1}{\heap} => \config{l_1}{\heap_1} \\
	\config{t_2}{\heap} => \config{l_2}{\heap_2} \\
	\vdots \\
	\config{t_n}{\heap} => \config{l_n}{\heap_n}
}{
	\config{\rregister{t_1}{t_n}}{\heap} => \config{\rregister{l_1}{l_n}}{\heap_n}
}}
\end{minipage}

\vspace{5pt}
\centerline{
\inference[\sc{[clean]}]
{
	\config{t}{\heap} => \config{l}{\heap'} \quad
	\heap'(l) = \rfalse
}{
	\config{\rclean{t}}{\heap} => \config{\runit}{\restr{\heap'}{\dom(\heap')\setminus\{l\}}}
}
\quad
\inference[\sc{[assert]}]
{
	\config{t}{\heap} => \config{l}{\heap'} \quad
	\heap'(l) = \rtrue
}{
	\config{\rassert{t}}{\heap} => \config{\runit}{\heap'}
}}

\caption{Operational semantics of \revlang.}
\label{fig:semantics}
\end{figure}

\subsection{Boolean expressions}
Our compiler uses XOR-AND Boolean expressions -- single output classical circuits over XOR and AND gates -- as an intermediate language. Compilation from Boolean expressions into reversible circuits forms the main ``code generation'' step of our compiler. 

A Boolean expression is defined as an expression over Boolean constants, variable indices, and logical $\oplus$ and $\land$ operators. Explicitly, we define
$${\bf BExp} \;\;B ::= \false \| \true \| i\in\N \|  B_1 \oplus B_2 \| B_1 \land B_2.$$ Note that we use the symbols $\false, \true, \oplus$ and $\land$ interchangeably with their interpretation in $\B$. We use $\vars{B}$ to refer to the set of free variables in $B$. 

We interpret a Boolean expression as a function from (total) Boolean-valued states to Booleans. In particular, we define $\Stat = \N\rightarrow\B$ and denote the semantics of a Boolean expression by $\sem{B}:\Stat\rightarrow\B$. The formal definition of $\sem{B}$ is obvious so we omit it.

\subsection{Target architecture}
\rever compiles to \emph{combinational, reversible circuits} over NOT, controlled-NOT and Toffoli gates. By combinational circuits we mean a sequence of logic gates applied to bits with no external means of control or memory -- effectively pure logical functions. We chose this model as it is suitable for implementing classical functions and oracles within quantum computations \cite{NC:2000}.

Formally, we define
$${\bf Circ} \; C ::= \nil \| \notgate{i} \| \cnot{i}{j} \| \toffoli{i}{j}{k} \| C_1::C_2,$$ i.e., $\bf Circ$ is the free monoid over NOT, CNOT, and Toffoli gates with unit $\nil$ and the append operator $::$. All but the last bit in each gate is called a \emph{control}, whereas the final bit is denoted as the \emph{target} and is the only bit \emph{modified} or changed by the gate. We use $\uses{C}$, $\mods{C}$ and $\controls{C}$ to denote the set of bit indices that are used in, modified by, or used as a control in the circuit $C$, respectively. A circuit is \emph{well-formed} if no gate contains more than one reference to a bit -- i.e., the bits used in each controlled-NOT or Toffoli gate are distinct.

Similar to Boolean expressions, a circuit is interpreted as a function from states (maps from indices to Boolean values) to states, given by applying each gate which updates the previous state in order. The formal definition of the semantics of a reversible circuit $C$, given by $\sem{C}:\Stat\rightarrow\Stat$, is straightforward:
\begin{align*}
	\sem{\notgate{i}}\st &= s\subst{i}{\neg \st(i)} \\
	\sem{\cnot{i}{j}}\st &= s\subst{j}{\st(i)\oplus \st(j)} \\
	\sem{\toffoli{i}{j}{k}}\st &= s\subst{k}{(\st(i)\land \st(j))\oplus \st(k)} \\
	\sem{-}\st = \st \qquad& \sem{C_1::C_2}\st = (\sem{C_2}\circ \sem{C_1})\st
\end{align*}
We use $s\subst{x}{y}$ to denote the function that maps $x$ to $y$, and all other inputs $z$ to $s(z)$; by an abuse of notation we use $\subst{x}{y}$ to denote other substitutions as well.

\section{Compilation}\label{sec:compilation}

In this section we discuss the implementation of \rever. The compiler consists of around 4000 lines of code in a common subset of \fstar and \fsharp, with a front-end to evaluate and translate \fsharp quotations into \revlang expressions.

\subsection{Boolean expression compilation}

The core of \rever's code generation is a compiler from Boolean expressions into reversible circuits. We use a modification of the method employed in \revs.

As a Boolean expression is already in the form of an irreversible classical circuit, the main job of the compiler is to allocate ancillas to store sub-expressions whenever necessary. \rever does this by maintaining a (mutable) heap of ancillas $\ah\in\Anc$ called an \emph{ancilla heap}, which keeps track of the currently available (zero-valued) ancillary bits. Cleaned ancillas (ancillas returned to the zero state) may be pushed back onto the heap, and allocations return previously used ancillas if any are available, hence not using any extra space.

The function \textsc{compile-BExp}, shown in pseudo-code below, takes a Boolean expression $B$ and a target bit $i$ and then generates a reversible circuit computing $i \oplus B$. Note that ancillas are only allocated to store sub-expressions of $\land$ expressions, since $i\oplus (B_1\oplus B_2) = (i\oplus B_1)\oplus B_2$ and so we compile $i\oplus (B_1 \oplus B_2)$ by first computing $i'=i\oplus B_1$, followed by $i'\oplus B_2$.

\vspace{5pt}
{
\footnotesize
\begin{algorithmic}
\Function{compile-BExp}{$B$, $i$, $\ah$}
	\If{$B=0$} $-$
	\ElsIf{$B=1$} $\notgate{i}$
	\ElsIf{$B=j$} $\cnot{j}{i}$
	\ElsIf{$B= B_1 \oplus B_2$} \textsc{compile-BExp}($B_1$, $i$, $\ah$)::\textsc{compile-BExp}($B_2$, $i$, $\ah$)
	\Else \;// $B= B_1\land B_2$
		\State $a_1\gets$ pop-min($\ah$); $C\gets$ \textsc{compile-BExp}($B_1$, $a_1$, $\ah$);
		\State $a_2\gets$ pop-min($\ah$); $C'\gets$ \textsc{compile-BExp}($B_2$, $a_2$, $\ah$);
		\State $C::C'::\toffoli{a_1}{a_2}{i}$
	\EndIf
\EndFunction
\end{algorithmic}
}

\subsubsection{Cleanup}

The definition of \bcomp{} above leaves many garbage bits that take up space and need to be cleaned before they can be re-used. To reclaim those bits, we clean temporary expressions after every call to \bcomp{}.

To facilitate the cleanup -- or \emph{uncomputing} -- of a circuit, we define the \emph{restricted inverse} $\uncompute{C}{A}$ of $C$ with respect to a set of bits $A\subset\N$ by reversing the gates of $C$, and removing any gates with a target in $A$. For instance: \vspace{-0.5em}
$$\uncompute{\cnot{i}{j}}{A} = \begin{cases} \;\;- & \text{if } j\in A \\ \cnot{i}{j} & \text{otherwise} \end{cases}\vspace{-0.5em}$$
The other cases are defined similarly. Note that since uncompute produces a subsequence of the original circuit $C$, no ancillary bits are used.

The restricted inverse allows the temporary values of a reversible computation to be uncomputed without affecting any of the target bits. In particular, if $C=$ \bcomp{}($B, i$), then the circuit $C::\uncompute{C}{\{i\}}$ maps a state $s$ to $s\subst{i}{\sem{B}s\xor s(i)}$, allowing any newly allocated ancillas to be pushed back onto the heap. Intuitively, since no bits contained in the set $A$ are modified, the restricted inverse preserves their values; that the restricted inverse uncomputes the values of the remaining bits is less obvious, but it can be observed that if the computation doesn't \emph{depend} on the value of a bit in $A$, the computation will be inverted. We formalize and prove this statement in Section~\ref{sec:verification}.

\subsection{\revs compilation}

In studying the \revs compiler, we observed that most of what the compiler was doing was evaluating the non-Boolean parts of the program -- effectively bookkeeping for registers -- only generating circuits for a small kernel of cases. As a result, transformations to different Boolean representations (e.g., circuits, dependence graphs \cite{Parent:2015}) and the interpreter itself reused significant portions of this bookkeeping code. To make use of this redundancy to simplify both writing and verifying the compiler, we designed \rever as a \emph{partial evaluator} parameterized by an abstract machine for evaluating Boolean expressions. As a side effect, we arrive at a unique model for circuit compilation similar to staged computation (see, e.g., \cite{Jones:1993}).

\rever works by evaluating the program with an abstract machine providing mechanisms for initializing and assigning locations on the store to Boolean expressions. We call an instantiation of this abstract machine an \emph{interpretation} $\I$, which consists of a domain $D$ equipped with two operators:
\begin{align*}
	&\assign: D\times \N\times{\bf BExp}\rightarrow D \\
	&\eval: D\times \N\times\Stat\pmap \B.
\end{align*}

We typically denote an element of an interpretation domain $D$ by $\heap$. A sequence of assignments in an interpretation builds a Boolean computation or circuit within a specific model (i.e., classical, reversible, different gate sets) which may be simulated on an initial state with the \eval{} function -- effectively an operational semantics of the model. Practically speaking, an element of $D$ abstracts the store in Figure~\ref{fig:semantics} and allows delayed computation or additional processing of the Boolean expression stored in a cell, which may be mapped into reversible circuits immediately or after the entire program has been evaluated. We give some examples of interpretations below.

\begin{example}
The standard interpretation $\I_{standard}$ has domain $\Heap=\N\pmap\B$, together with the operations
\begin{align*}
	&\assign_{standard}(\heap, l, B)=\heap\subst{l}{\sem{B}\heap} \\
	&\eval_{standard}(\heap, l, \st) = \heap(l).
\end{align*}
Partial evaluation over the standard interpretation coincides exactly with the operational semantics of \revlang.
\end{example}

\begin{example}
The \emph{reversible circuit} interpretation $\I_{circuit}$ has domain $D_{circuit}=(\N\pmap\N)\times\Circ\times\Anc$. In particular, given $(\rho, C, \ah)\in D_{circuit}$, $\rho$ maps heap locations to bits in $C$, and $\ah$ is an ancilla heap. Assignment and evaluation are further defined as follows:
\begin{align*}
	&\assign_{circuit}((\rho, C, \ah), l, B)=(\rho\subst{l}{i}, C::C', \ah) \\
		&\qquad\text{where }i=\text{ pop-min}(\ah), \\ 
		&\qquad\hspace{2pt}(C', \ah')=\textsc{ compile-BExp}\left(B\subst{l'\in \vars{B}}{\rho(l')}, i, \ah\right) \\
	&\eval_{circuit}((\rho, C, \ah), l, \st)) = \left(\sem{C}\st\right)(\rho(l))
\end{align*}
Interpreting a program with $\I_{circuit}$ builds a reversible circuit executing the program, together with a mapping from heap locations to bits. Since the circuit is required to be reversible, when a location is overwritten, a new ancilla $i$ is allocated and the expression $B\oplus i$ is compiled into a circuit. Evaluation amounts to running the circuit on an initial state, then retrieving the value at the bit associated with a heap location. 
\end{example}

Given an interpretation $\I$ with domain $D$, we define the set of $\I$-configurati\-ons as $\mathbf{Config}_\I=\mathbf{Term}\times D$ -- that is, $\I$-configurations are pairs of programs and elements of $D$ which function as an abstraction of the heap. The relation $${\implies_\I}\subseteq\mathbf{Config}_\I\times\mathbf{Config}_\I$$ gives the operational semantics of \revs over the interpretation $\I$. We do not give a formal definition of $\implies_\I$, as it can be obtained trivially from the definition of $\implies$ (Figure~\ref{fig:semantics}) by replacing all heap updates with $\assign{}$ and taking $\dom(\heap)$ to mean the set of locations on which $\eval{}$ is defined. To compile a program term $t$, \rever evaluates $t$ over a particular interpretation $\I$ (for instance, the reversible circuit interpretation) and an initial heap $\sigma\in D$ according to the semantic relation $\implies_\I$. In this way, evaluating a program and compiling a program to a circuit look almost identical. This greatly simplifies the problem of verification (see Section~\ref{sec:verification}).

\rever currently supports three modes of compilation, defined by giving interpretations: a default mode, an eagerly cleaned mode, and a ``crush'' mode. The default mode evaluates the program using the circuit interpretation, and simply returns the circuit and output bit(s), while the eager cleanup mode operates analogously, using instead the garbage-collected interpretation defined below in Section~\ref{sec:gc}. The crush mode interprets a program as a list of Boolean expressions over free variables, which while unscalable allows highly optimized versions of small circuits to be compiled, a common practice in circuit synthesis. We omit the details of the Boolean expression interpretation. 

\subsubsection{Function compilation}

While the definition of \rever as a partial evaluator streamlines both development and verification, there is an inherent disconnect between the treatment of a (top-level) function expression by the interpreter and by the compiler, in that we want the compiler to evaluate the function body. Instead of defining a two-stage semantics for \revs we took the approach of applying a program transformation, whereby the function being compiled is evaluated on special heap locations representing the parameters. This creates a further problem in that the compiler needs to first determine the size of each parameter; to solve this problem, \rever performs a static analysis we call \emph{parameter interference}. We omit the details of this analysis due to space constraints and instead point the interested reader to an extended version of this paper \cite{Amy:2016}.

\subsection{Eager cleanup}\label{sec:gc}

It was previously noted that the circuit interpretation allocates a new ancilla on every assignment to a location, due to the requirement of reversibility. Apart from \rever's additional optimization passes, this is effectively the Bennett method, and hence uses a large amount of extra space. One way to keep the space usage from continually expanding as assignments are made is to clean the old bit as soon as possible and then reuse it, rather than wait until the end of the computation. Here we develop an interpretation that performs this automatic, eager cleanup by augmenting the circuit interpretation with a \emph{cleanup expression} for each bit. Our method is based on the eager cleanup of \cite{Parent:2015}, and was intended as a more easily verifiable alternative to mutable dependency diagrams.

The \emph{eager cleanup} interpretation $\I_{GC}$ has domain $$D=(\N\pmap\N)\times\Circ\times\Anc\times(\N\pmap\BExp),$$ where given $(\rho, C, \ah, \cl)\in D$, $\rho$, $C$ and $\ah$ are as in the circuit interpretation. The partial function $\cl$ maps individual bits to a Boolean expression over the bits of $C$ which can be used to return the bit to its initial state, called the cleanup expression. Specifically, we have the following property: 
$$\forall i\in\cod(\rho), s'(i) \oplus \sem{\cl(i)}s' = s(i) \qquad\text{where $\st'=\sem{C}\st$.}$$ Intuitively, any bit $i$ can then be cleaned by simply computing $i\mapsto i\oplus \cl(i)$, which in turn can be done by calling \textsc{compile-BExp}($\cl(i)$, $i$).

Two problems remain, however. In general it may be the case that a bit \emph{can not} be cleaned without affecting the value of other bits, as it might result in a loss of information -- in the context of cleanup expressions, this occurs exactly when a bit's cleanup expression contains an irreducible self-reference. In particular, if $i\in\vars{B}$, then \textsc{compile-BExp}($B$, $i$) does not compile a circuit computing $i\oplus B$ and hence won't clean the target bit correctly. In the case when a garbage bit contains a self-reference in its cleanup expression that can not be eliminated by Boolean simplification, \rever simply ignores the bit and performs a final round of cleanup at the end.

The second problem arises when a bit's cleanup expression references another bit that has itself since been cleaned or otherwise modified. In this case, the modification of the latter bit has invalidated the correctness property for the former bit. To ensure that the above relation always holds, whenever a bit is modified -- corresponding to an XOR of the bit, $i$, with a Boolean expression $B$ -- all instances of bit $i$ in every cleanup expression is replaced with $i\oplus B$. Specifically we observe that, if $s'(i) = s(i)\oplus \sem{B}s$, then $$s'(i)\oplus \sem{B}s = s(i)\oplus \sem{B}s \oplus\sem{B}s = s(i).$$

The function $\textsc{clean}$, defined below, performs the cleanup of a bit $i$ if possible, and validates all cleanup expressions in a given element of $D$:

{
\footnotesize
\begin{algorithmic}
\Function{clean}{$(\rho, C, \ah, \cl)$, $i$}
	\If{$i\in\vars{\cl(i)}$} \Return $(\rho, C, \ah, \cl)$
	\Else
		\State $C'\gets\text{\textsc{compile-BExp}}(\cl(i), i, \ah)$
		\State \textbf{if} $i$ is an ancilla \textbf{then} insert($i, \ah$)
		\State $\cl'\gets\cl\subst{i'\in\dom(\cl)}{\cl(i')\subst{i}{i\oplus\cl(i)}}$
		\State \Return $(\rho, C::C', \ah, \cl')$
	\EndIf
\EndFunction
\end{algorithmic}
}

Assignment and evaluation are defined in the eager cleanup interpretation as follows. Both are effectively the same as in the circuit interpretation, except the assignment operator calls \textsc{clean} on the previous bit mapped to $l$.
\begin{align*}
	&\assign_{GC}((\rho, C, \ah, \cl), l, B)=\textsc{clean}((\rho\subst{l}{i},C::C', \ah, \cl\subst{i}{B'}), i) \\
		&\quad\text{where $i=$ pop-min($\ah$),} \\ 
		&\qquad\quad\text{$B'=B\subst{l'\in \vars{B}}{\rho(l')}$} \\
		&\qquad\quad\text{$C'=$\textsc{ compile-BExp}($B', i, \ah)$} \\
	&\eval_{GC}((\rho, C, \ah, \cl), l, \st)) = \left(\sem{C}\st\right)(\rho(l))
\end{align*}

The eager cleanup interpretation coincides with a reversible analogue of \emph{garbage collection} for a very specific case when the number of references to a heap location (or in our case, a bit) is trivially zero. In fact, the \textsc{clean} function can be used to eagerly clean bits that have no reference in other contexts. We intend to expand \rever to include a generic garbage collector that uses cleanup expressions to more aggressively reclaim space -- for instance, when a bit's unique pre-image on the heap leaves the current scope.

\subsection{Optimizations}

During the course of compilation it is frequently the case that more ancillas are allocated than are actually needed, due to the program structure. For instance, when compiling the expression $i\gets B$, if $B$ can be factored as $i\oplus B'$ the assignment may be performed reversibly rather than  allocating a new bit to store the value of $B$. Likewise if $i$ is provably in the $0$ or $1$ state, the assignment may be performed reversibly without allocating a new bit. Our implementation identifies some of these common patterns, as well as general Boolean expression simplifications, to further minimize the space usage of compile circuits. All such optimizations in \rever have been formally verified.

\section{Verification}\label{sec:verification}

In this section we describe the formal verification of \rever and give the major theorems proven. All theorems given in this section have been formally specified and proven using the \fstar compiler \cite{Swamy:2015}. We first give theorems about our Boolean expression compiler, then use these to prove properties about whole program compilation. The total verification of the \rever core's approximately 2000 lines of code comprises around 2200 lines of \fstar code, and took just over 1 person-month. We feel that this relatively low-cost verification is a testament to the increasing ease with which formal verification can be carried out using modern proof assistants. Additionally, the verification relies on only 11 unproven axioms regarding simple properties of lookup tables and sets, such as the fact that a successful lookup is in the codomain of a lookup table.

Rather than give \fstar specifications, we translate our proofs to mathematical language as we believe this is more enlightening. The full source code of \rever including proofs can be obtained at \href{https://github.com/msr-quarc/ReVerC}{https://github.com/msr-quarc/ReVerC}.

\subsection{Boolean expression compilation}

\subsubsection{Correctness}

Below is our main theorem establishing the correctness of the function $\bcomp$ with respect to the semantics of reversible circuits and Boolean expressions. It states that if the variables of $B$, the bits on the ancilla heap and the target are non-overlapping, and if the ancilla bits are $0$-valued, then the circuit computes the expression $i\oplus B$.

\begin{theorem}\label{thm:bexpcorrect}
Let $B$ be a Boolean expression, $\ah$ be an ancilla heap, $i\in\N$, $C\in\Circ$ and $\st$ be a map from bits to Boolean values. Suppose $\vars{B}$, $\ah$ and $\{i\}$ are all disjoint and $\st(j)=0$ for all $j\in\ah$. Then $$\left(\sem{\bcomp(B, i, \ah)}\st\right)(i)=\st(i)\oplus\sem{B}\st.$$
\end{theorem}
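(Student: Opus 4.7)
The plan is to proceed by structural induction on the Boolean expression $B$. However, the statement as given is not quite strong enough to carry the induction through the $\land$ case: there we need to know not only that each recursive call produces the correct value at its target bit, but also that it does not disturb bits outside a controlled set, and that the ancilla heap only shrinks monotonically. Accordingly, I would first prove a strengthened lemma: if the preconditions hold and $(C, \ah')$ is the pair (output circuit, updated heap) returned by $\bcomp(B, i)$ starting from heap $\ah$, then (a) $\ah' \subseteq \ah$, (b) $\mods{C} \subseteq (\ah \setminus \ah') \cup \{i\}$, and (c) $\sem{C}\st(i) = \st(i) \oplus \sem{B}\st$. The original theorem is the $(c)$ component of this strengthened statement, and an ancillary invariant on the call is that $\st(j) = 0$ for all $j \in \ah$.

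The base cases $B = 0$, $B = 1$, $B = j$ are immediate from the semantics (for $B = j$, note that disjointness of $\vars{B}$ and $\{i\}$ gives $j \neq i$, so the CNOT acts on distinct bits). For $B = B_1 \oplus B_2$, let $C_1, C_2$ be the sub-circuits produced by the two recursive calls and let $\ah_1$ be the heap after the first call. Applying the inductive hypothesis to $B_1$ yields that $\st_1 := \sem{C_1}\st$ satisfies $\st_1(i) = \st(i) \oplus \sem{B_1}\st$ and $\mods{C_1} \subseteq (\ah \setminus \ah_1) \cup \{i\}$. Since $\vars{B_2} \subseteq \vars{B}$ is disjoint from $\ah$ and from $\{i\}$, $C_1$ does not touch any variable of $B_2$, so $\sem{B_2}\st_1 = \sem{B_2}\st$; and since $i \notin \ah_1 \subseteq \ah$, we also have $\ah_1 \cap \mods{C_1} = \emptyset$, so the ancillas still on the heap remain zero in $\st_1$. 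This licenses a second application of the inductive hypothesis to $B_2$, giving $\sem{C_1 :: C_2}\st(i) = \st(i) \oplus \sem{B_1}\st \oplus \sem{B_2}\st$; the auxiliary claims $(a), (b)$ follow by transitivity of $\subseteq$ and unioning the modified sets.

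The hard case, and the main obstacle, is $B = B_1 \land B_2$. Let $\ah_0 = \ah \setminus \{a_1, a_2\}$ be the heap after popping the two fresh ancillas, and let $\ah_1, \ah_2$ be the heaps after the two recursive calls with targets $a_1$ and $a_2$ respectively. The delicate points are: (i) verifying the preconditions of the inductive hypothesis at each step, namely that $a_1, a_2 \notin \vars{B}$ (which follows from $\ah \cap \vars{B} = \emptyset$), that $a_1, a_2$ are distinct from each other and from the remaining heap (by construction of pop), and that $\st(a_1) = \st(a_2) = 0$ (from the hypothesis on $\ah$); (ii) showing that $C_1$ does not disturb $a_2$, since $a_2 \notin \ah_0$ and $a_2 \neq a_1$ imply $a_2 \notin (\ah_0 \setminus \ah_1) \cup \{a_1\} \supseteq \mods{C_1}$, and that $C_2$ does not disturb $a_1$ or $i$ by the analogous argument; (iii) concluding that just before the final Toffoli gate, bits $a_1, a_2, i$ hold $\sem{B_1}\st$, $\sem{B_2}\st$, and $\st(i)$ respectively, so that the Toffoli writes $\st(i) \oplus (\sem{B_1}\st \land \sem{B_2}\st) = \st(i) \oplus \sem{B}\st$ into bit $i$, as required. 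The bookkeeping for $(a)$ and $(b)$ amounts to chaining $\ah_2 \subseteq \ah_1 \subseteq \ah_0 \subseteq \ah$ and observing that the overall modified set is contained in $(\ah \setminus \ah_2) \cup \{i\}$. I expect essentially all the work to be in carefully discharging the disjointness side-conditions at (i)--(ii); the arithmetic of the final step is routine once those are in hand.
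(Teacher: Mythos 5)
Your proposal is correct and matches the paper's own argument: the paper likewise proceeds by structural induction on $B$, discharging the recursive cases by appealing to exactly the auxiliary facts you fold into your strengthened hypothesis (the heap only shrinks across a recursive call, and the compiled sub-circuit modifies nothing still on the heap nor the variables of the sibling sub-expression), which appear in the formal development as separate lemmas rather than as conjuncts of one statement. The only difference is presentational — you spell out the $\land$ case in detail where the paper only sketches the $\oplus$ case — so no further comparison is needed.
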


\subsubsection{Cleanup}

As remarked earlier, a crucial part of reversible computing is cleaning ancillas both to reduce space usage, and in quantum computing to prevent entangled qubits from influencing the computation. Moreover, the correctness of our cleanup is actually necessary to prove correctness of the compiler, as the compiler re-uses cleaned ancillas on the heap, potentially interfering with the precondition of Theorem~\ref{thm:bexpcorrect}. We use the following lemma to establish the correctness of our cleanup method, stating that the uncompute transformation reverses all changes on bits not in the target set under the condition that no bits in the target set are used as controls.

\begin{lemma}\label{thm:cleanup}
Let $C$ be a well-formed reversible circuit and $A\subset\N$ be some set of bits. If $A\cap \controls{C}=\emptyset$ then for all states $\st, \st'=\sem{C::\uncompute{C}{A}}\st$ and any $i\notin A$, $$\st(i) = \st'(i)$$
\end{lemma}
Lemma~\ref{thm:cleanup} largely relies on the following important lemma stating in effect that the action of a circuit is determined by the values of the bits used as controls:
\begin{lemma}\label{lem:stateswap}
Let $A\subset\N$ and $\st,\st'$ be states such that for all $i\in A$, $\st(i)=\st'(i)$. If $C$ is a reversible circuit where $\controls{C}\subseteq A$, then $$(\sem{C}\st)(i)=(\sem{C}\st')(i)$$ for all $i\in A$.
\end{lemma}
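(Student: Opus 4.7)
The plan is to proceed by structural induction on the reversible circuit $C$, as suggested by the phrase ``inductive case analysis'' in the statement. The induction hypothesis will say that for any two states agreeing on $A$, the states obtained by applying a circuit whose controls lie in $A$ continue to agree on $A$.

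For the base cases, I would treat $\nil$ trivially and then handle $\notgate{i}$, $\cnot{i}{j}$, and $\toffoli{i}{j}{k}$ in turn. The key observation in each gate case is a split by whether the target bit $t$ lies in $A$: if $t\notin A$ then all bits in $A$ are untouched, so the precondition $\st|_A = \st'|_A$ is preserved directly. If $t\in A$, then the new value of $t$ depends only on the controls of the gate and the prior value at $t$; by hypothesis the controls are all in $A$, and $t\in A$, so $\st$ and $\st'$ agree on every bit read by the gate and therefore assign the same new value to $t$. For well-formedness, this uses that target and controls are distinct but nothing more.

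For the inductive case $C = C_1 :: C_2$, I would use $\controls{C_1::C_2} = \controls{C_1}\cup\controls{C_2}$, so both $\controls{C_1}$ and $\controls{C_2}$ are contained in $A$. Applying the induction hypothesis to $C_1$ yields $(\sem{C_1}\st)(i) = (\sem{C_1}\st')(i)$ for all $i\in A$; then applying the induction hypothesis to $C_2$ with the intermediate states $\sem{C_1}\st$ and $\sem{C_1}\st'$ gives the conclusion for $C_1::C_2$.

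There is no real obstacle here: the argument is a plain structural induction and the gate cases reduce to simple equations in $\B$. The only mild subtlety is remembering, in each gate case, to separately check the target both when it lies in $A$ (where agreement on controls plus agreement at the target transfers through the XOR) and when it lies outside $A$ (where the hypothesis on $A$ is preserved untouched). Given how short the proof is, I would expect it to be a handful of lines in \fstar and a paragraph in prose.
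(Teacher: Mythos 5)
Your proposal is correct and matches the paper's own argument: the formal proof (\texttt{evalCirc\_state\_lem} in Appendix~\ref{app:proofs}) is exactly a structural induction over the gate list, discharging each single gate via a per-gate lemma that splits on whether the target lies in $A$, and then recursing on the tail with the two updated states. The only cosmetic difference is that well-formedness is not actually needed here (the lemma holds even if a gate's target coincides with a control), but your aside on that point does no harm.
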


Lemma~\ref{thm:cleanup}, together with the fact that $\bcomp$ produces a well-formed circuit under disjointness constraints, gives us our cleanup theorem below that Boolean expression compilation with cleanup correctly reverses the changes to every bit except the target.

\begin{theorem}
Let $B$ be a Boolean expression, $\ah$ be an ancilla heap and $i\in\N$ such that $\vars{B}$, $\ah$ and $\{i\}$ are all disjoint. Suppose $\bcomp(B, i, \ah)=C$. Then for all $j\neq i$ and states $\st$ we have $$\left(\sem{C\circ\uncompute{C}{\{i\}}}\st\right)(j)=\st(j).$$
\end{theorem}

\subsection{\revlang compilation}

It was noted in Section~\ref{sec:compilation} that the design of \rever as a partial evaluator simplifies proving correctness. We expand on that point now, and in particular show that if a relation between elements of two interpretations is preserved by assignment, then the evaluator also preserves the relation. We state this formally in the theorem below.

\begin{theorem}\label{thm:sim}
Let $\I_1, \I_2$ be interpretations and suppose whenever $(\heap_1, \heap_2)\in R$ for some relation $R\subseteq \I_1\times\I_2$, $$(\assign_{1}(\heap_1, l, B), \assign_{2}(\heap_2, l, B))\in R$$ for any $l, B$. Then for any term $t$, if $\config{t}{\heap_1}\implies_{\I_1}\config{v_1}{\heap_1'}$ and $\config{t}{\heap_2}\implies_{\I_2}\config{v_2}{\heap_2'}$, then $v_1=v_2$ and $(\heap_1',\heap_2')\in R$.
\end{theorem}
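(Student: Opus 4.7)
The plan is to proceed by simultaneous induction on the two derivations $\config{t}{\heap_1}\implies_{\I_1}\config{v}{\heap_1'}$ and $\config{t}{\heap_2}\implies_{\I_2}\config{v}{\heap_2'}$, which by the shape of the rules in Figure~\ref{fig:semantics} must use the same rule at the root (since the rule applied is determined by the outermost syntactic form of $t$). The base case is \textsc{[refl]}: when $t$ is a value, no heap update occurs, so $\heap_i' = \heap_i$ and the conclusion $(\heap_1', \heap_2') \in R$ follows immediately from the hypothesis.

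For the inductive step I would divide the rules into two families. The first family is the ``plumbing'' rules \textsc{[let]}, \textsc{[app]}, \textsc{[seq]}, \textsc{[reg]}, \textsc{[append]}, \textsc{[index]}, \textsc{[slice]}, \textsc{[rotate]}, \textsc{[clean]}, \textsc{[assert]}, whose conclusions never invoke $\assign$ directly: they only thread the heap through one or more sub-derivations. In each of these cases the argument is a straightforward chain of applications of the induction hypothesis, one per sub-evaluation, using the fact that the returned value $v$ in $\I_1$ and $\I_2$ matches (so the intermediate values $v_1, v_2, \dots$ coincide across the two derivations by determinism of the term under a fixed interpretation). The second family is the ``assigning'' rules \textsc{[true]}, \textsc{[false]}, \textsc{[bexp]}, \textsc{[assn]}: in the lifted semantics $\implies_{\I}$ these are exactly the rules where the final heap is produced by a single call to $\assign$ (with $B$ equal to $\true$, $\false$, $l_1\star l_2$, or $l_2$ respectively). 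For these, I would first apply the induction hypothesis to the sub-derivations (evaluating $t_1$ and $t_2$) to obtain intermediate heaps related by $R$, and then conclude by one application of the theorem's assumption that $\assign$ preserves $R$.

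The main subtlety — and I expect this to be the most delicate step — is the handling of freshness in \textsc{[true]}, \textsc{[false]}, and \textsc{[bexp]}, where the conclusion existentially picks $l \notin \dom(\heap)$. For the parallel induction to go through, both interpretations must pick the \emph{same} fresh location $l$, since we assume both derivations return the same value $v$ (which for these rules is exactly the freshly chosen location). This forces either an implicit determinism assumption on the allocation strategy (e.g., always pick the smallest index not in use) or that the relation $R$ implicitly constrains $\dom(\heap_1)$ and $\dom(\heap_2)$ to coincide. In the \fstar{} development this is presumably ensured uniformly, so once this coherence condition is in place the argument collapses to the clean recursion-plus-$\assign$-preservation schema described above.
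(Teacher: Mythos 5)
Your approach is sound, but note that the paper itself never proves this theorem: there is no argument in the main text and, unlike Theorems 1, 2, 4 and 5, no corresponding listing in Appendix~\ref{app:proofs} (the \fstar{} code jumps from the cleanup corollary to the ``Theorem 4'' lemmas), so there is nothing to compare against line by line. That said, your induction on the pair of derivations, with the split into heap-threading rules and the four rules that desugar to a single $\assign$ call, is the natural argument and matches how the paper describes $\implies_\I$ (``obtained trivially from $\implies$ by replacing all heap updates with $\assign$''). The freshness subtlety you flag is real and is the crux: the theorem's hypothesis only forces the \emph{top-level} values to agree, so for the induction hypothesis to apply to sub-derivations you need the intermediate values --- in particular the freshly allocated locations in \textsc{[true]}, \textsc{[false]} and \textsc{[bexp]} --- to coincide across the two interpretations; this requires that location allocation live in the evaluator rather than in the interpretation (which only supplies $\assign$ and $\eval$), exactly as you suggest. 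Two small things you gloss over: \textsc{[clean]} does not merely thread the heap, it restricts its domain, so the lifted semantics must either drop that operation or the relation $R$ must be preserved by it; and the phrase ``determinism of the term under a fixed interpretation'' understates what you need, which is determinism of the value component \emph{across} the two interpretations. With those points made explicit the proof goes through.
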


Theorem~\ref{thm:sim} lifts properties about interpretations to properties of evaluation over those abstract machines -- in particular, we only need to establish that \emph{assignment} is correct for an interpretation to establish correctness of the corresponding evaluator/compiler. In practice we found this significantly reduces boilerplate proof code that is otherwise currently necessary in \fstar due to a lack of automated induction.

Given two interpretations $\I, \I'$, we say elements $\heap$ and $\heap'$ of $\I$ and $\I'$ are \emph{observationally equivalent} with respect to a supplied set of initial values $\st\in\Stat$ if for all $i\in\N$, $\eval_{\I}(\heap, i, \st) = \eval_{\I'}(\heap', i, \st)$. We say $\heap\sim_\st\heap'$ if $\heap$ and $\heap'$ are observationally equivalent with respect to $\st$. As observational equivalence of two domain elements $\heap, \heap'$ implies that any location in scope has the same valuation in either interpretation, it suffices to show that any compiled circuit is observationally equivalent to the standard interpretation. The following lemmas are used along with Theorem~\ref{thm:sim} to establish this fact for the default and eager-cleanup interpretations -- a similar lemma is proven in the implementation of \rever for the crush mode.

\begin{lemma}\label{thm:assign}
Let $\heap, \heap'$ be elements of $\I_{standard}$ and $\I_{circuit}$, respectively. For all $l\in\N, B\in\BExp, \st\in\Stat$, if $\heap\sim_\st\heap'$ and $\st(i)=0$ whenever $i\in\ah$, then $$\assign_{standard}(\heap, l, B)\sim_\st\assign_{circuit}(\heap', l, B).$$ Moreover, the ancilla heap remains $0$-filled.
\end{lemma}

We say that $(\rho, C, \ah)\in D_{circuit}$ is \emph{valid} with respect to $s\in\Stat$ if and only if $s(i) = 0$ for all $i\in\ah$. For elements of $D_{GC}$ the validity conditions are more involved, so we introduce a relation, $\mathcal{V}\subseteq D_{GC}\times \Stat$, defining the set of valid domain elements:\vspace{-5pt}
\begin{align*}
	((\rho, C, \ah, \cl), \st)\in\mathcal{V} \iff \;
		\forall i\in\ah, s(i) = 0 &\land \forall l, l'\in\dom(\rho), \rho(l) \neq \rho(l') \\
		&\land\forall i\in\cod(\rho), \sem{i\oplus \cl(i)}(\sem{C}s) = s(i)\vspace{-5pt}
\end{align*}
Informally, $\mathcal{V}$ specifies that all bits on the heap have initial value $0$, that $\rho$ is a one-to-one mapping, and that for every active bit $i$, XORing $i$ with $\cl(i)$ returns the initial value of $i$ -- that is, $i\oplus \cl(i)$ \emph{cleans} $i$.

\begin{lemma}\label{thm:assignGC}
Let $\heap, \heap'$ be elements of $\I_{standard}$ and $\I_{GC}$, respectively. For all $l\in\N, B\in\BExp, \st\in\Stat$, if $\heap\sim_\st\heap'$ and $(\heap', \st)\in\mathcal{V}$, then $$\assign_{standard}(\heap, l, B)\sim_\st\assign_{GC}(\heap', l, B).$$ Moreover, $(\assign_{GC}(\heap', l, B), \st)\in\mathcal{V}$.
\end{lemma}

By setting the relation $R_{GC}$ as \vspace{-5pt} $$(\sigma_1, \sigma_2)\in R_{GC} \iff \sigma_2\in \mathcal{V} \land \sigma_1 \sim_{s_0} \sigma_2\vspace{-5pt}$$ for $\sigma_1\in D_{standard}$, by Theorem~\ref{thm:sim} and Lemma~\ref{thm:assignGC} it follows that partial evaluation/compilation preserves observational equivalence between $\I_{standard}$ and $\I_{GC}$. A similar result follows for $\I_{circuit}$.

To formally prove correctness of the compiler we need initial values in each interpretation (and an initial state) which are observationally equivalent. We don't describe the initial values here as they are dependent on the program transformation applied to expand top-level functions.
\vspace{-8pt}

\section{Experiments}\label{sec:experiments}

\vspace{-5pt}

We ran experiments to compare the bit, gate and Toffoli counts of circuits compiled by \rever to the original \revs compiler. The number of Toffoli gates in particular is distinguished as such gates are generally much more costly than NOT and controlled-NOT gates -- at least $7$ times as typical implementations use $7$ CNOT gates \emph{per Toffoli} \cite{NC:2000}, or up to hundreds of times in most fault-tolerant architectures \cite{Amy:2013}.
We compiled circuits for various arithmetic and cryptographic functions written in \revlang using both compilers and reported the results in Table~\ref{tab:def}. Experiments were run in Linux using 8\,GB of RAM.

\begin{table}
\scriptsize
\centering
\begin{tabular}{ l  r  r  r  r  r  r  r r r  r r r}
	\toprule
	Benchmark & \multicolumn{3}{c}{\revs} & \multicolumn{3}{c}{\revs (eager)} & \multicolumn{3}{c}{\rever}  & \multicolumn{3}{c}{\rever (eager)} \\ \cmidrule(l{4pt}r{1pt}){2-4} \cmidrule(l{4pt}r{1pt}){5-7} \cmidrule(l{4pt}r{1pt}){8-10} \cmidrule(l{4pt}r{1pt}){11-13}
	 & bits & gates & \hspace{-1pt}Toffolis & bits & gates & \hspace{-1pt}Toffolis & bits & gates & \hspace{-1pt}Toffolis & bits & gates & \hspace{-1pt}Toffolis \\ \midrule
	carryRippleAdd 32 
		& 129 & 281 & {\bf 62} 
		& 129 & 467 & 124 
		& 128 & 281 & {\bf 62}
		& {\bf 113} & 361 & 90 \\
	carryRippleAdd 64 
		& 257 & 569 & {\bf 126} 
		& 257 & 947 & 252
		& 256 & 569 & {\bf 126}
		& {\bf 225} & 745 & 186 \\ 
	mult 32 
		& 128 & 6016 & 4032 
		& 128 & 6016 & 4032
		& 128 & 6016 & 4032
		& 128 & 6016 & 4032 \\ 
	mult 64 
		& 256 & 24320 & 16256
		& 256 & 24320 & 16256 
		& 256 & 24320 & 16256
		& 256 & 24320 & 16256 \\ 
	carryLookahead 32 
		& 160 & 345 & {\bf 103} 
		& {\bf 109} & 1036 & 344
		& 165 & 499 & 120
		& 146 & 576 & 146 \\
	carryLookahead 64 
		& 424 & 1026 & {\bf 307} 
		& {\bf 271} & 3274 & 1130
		& 432 & 1375 & 336
		& 376 & 1649 & 428 \\ 
	modAdd 32 
		& 65 & 188 & 62 
		& 65 & 188 & 62
		& 65 & 188 & 62
		& 65 & 188 & 62 \\
	modAdd 64 
		& 129 & 380 & 126 
		& 129 & 380 & 126 
		& 129 & 380 & 126
		& 129 & 380 & 126  \\ 
	cucarroAdder 32 
		& 65 & 98 & 32
		& 65 & 98 & 32 
		& 65 & 98 & 32
		& 65 & 98 & 32 \\
	cucarroAdder 64 
		& 129 & 194 & 64 
		& 129 & 194 & 64
		& 129 & 194 & 64
		& 129 & 194 & 64 \\
	ma4 
		& 17 & 24 & 8
		& 17 & 24 & 8
		& 17 & 24 & 8
		& 17 & 24 & 8 \\
	SHA-2 round 
		& 449 & 1796 & {\bf 594} 
		& {\bf 353} & 2276 & 754
		& 452 & 1796 & {\bf 594}
		& 449 & 1796 & {\bf 594} \\
	MD5 
		& 7841 & 81664 & {\bf 27520}
		& 7905 & 82624 & 27968
		& 4833 & 70912 & {\bf 27520}
		& {\bf 4769} & 70912 & {\bf 27520} \\ 
	\bottomrule
\end{tabular}
\caption{Bit and gate counts for both compilers in default and eager cleanup modes. In cases when not all results are the same, entries with the fewest bits used or Toffolis are bolded.}
\label{tab:def}
\end{table}

The results show that both compilers are more-of-less evenly matched in terms of bit counts across both modes, despite \rever being certifiably correct. \rever's eager cleanup mode never used more bits than the default mode, as expected, and in half of the benchmarks reduced the number of bits. Moreover, in the cases of the carryRippleAdder and MD5 benchmarks, \rever's eager cleanup mode produced circuits with significantly fewer bits than either of \revs' modes. On the other hand, \revs saw dramatic decreases in bit numbers for carryLookahead and SHA-2 with its eager cleanup mode compared to \rever.

While the results show there is clearly room for optimization of gate counts, they appear consistent with other verified compilers (e.g., \cite{Leroy:2006}) which take some performance hit when compared to unverified compilers. In particular, unverified compilers may use more aggressive optimizations due to the increased ease of implementation and the lack of a requirement to prove their correctness compared to certified compilers. In some cases, the optimizations are even known to not be correct in all possible cases, as in the case of fast arithmetic and some loop optimization passes in the GNU C Compiler \cite{GCC}.


\section{Conclusion}\label{sec:conclusion}

We have described our verified compiler for the \revs language, \rever. Our method of compilation differs from the original \revs compiler by using partial evaluation over an interpretation of the heap to compile programs, forgoing the need to re-implement and verify bookkeeping code for every internal translation. We described two interpretations implemented in \rever, the circuit interpretation and a garbage collected interpretation, the latter of which refines the former by applying eager cleanup.

While \rever is verified in the sense that compiled circuits produce the same result as the program interpreter, as with any verified compiler project this is not the end of certification. The implementation of the interpreter may have subtle bugs, which ideally would be verified against a more straightforward adaptation of the semantics using a relational definition. We intend to address these issues in the future, and to further improve upon \rever's space usage.

\bibliographystyle{splncs03}
\bibliography{main}

\end{document}